\documentclass[11pt]{article}

\usepackage[margin=1in]{geometry}

\usepackage{amsmath}
\usepackage{amssymb}
\usepackage{amsthm}
\usepackage{mathtools}

\usepackage{booktabs}
\usepackage{array}
\usepackage{multirow}
\usepackage{tabularx}

\usepackage{listings}
\usepackage{xcolor}

\usepackage[colorlinks=true,linkcolor=blue,citecolor=blue,urlcolor=blue]{hyperref}

\usepackage{natbib}

\usepackage{tikz}
\usetikzlibrary{positioning, arrows.meta, fit, backgrounds, calc, shapes.geometric}

\usepackage{enumitem}
\usepackage{microtype}
\usepackage{graphicx}

\newtheorem{theorem}{Theorem}[section]
\newtheorem{proposition}[theorem]{Proposition}

\theoremstyle{definition}
\newtheorem{definition}[theorem]{Definition}
\theoremstyle{remark}
\newtheorem{remark}[theorem]{Remark}

\definecolor{codebg}{rgb}{0.95,0.95,0.95}
\definecolor{codegreen}{rgb}{0,0.6,0}
\definecolor{codepurple}{rgb}{0.58,0,0.82}
\definecolor{codeblue}{rgb}{0.0,0.0,0.7}

\lstdefinestyle{formal}{
  backgroundcolor=\color{codebg},
  basicstyle=\ttfamily\small,
  breaklines=true,
  frame=single,
  framerule=0pt,
  xleftmargin=1em,
  xrightmargin=1em,
  aboveskip=0.5em,
  belowskip=0.5em,
  mathescape=true,
  showstringspaces=false,
}

\lstdefinestyle{json}{
  backgroundcolor=\color{codebg},
  basicstyle=\ttfamily\small,
  breaklines=true,
  frame=single,
  framerule=0pt,
  xleftmargin=1em,
  xrightmargin=1em,
  aboveskip=0.5em,
  belowskip=0.5em,
  showstringspaces=false,
}

\lstdefinestyle{toml}{
  backgroundcolor=\color{codebg},
  basicstyle=\ttfamily\small,
  breaklines=true,
  frame=single,
  framerule=0pt,
  xleftmargin=1em,
  xrightmargin=1em,
  aboveskip=0.5em,
  belowskip=0.5em,
  showstringspaces=false,
  commentstyle=\color{gray}\itshape,
  morecomment=[l]{\#},
}

\title{Cryptographic Registry Provenance:\\Structural Defense Against Dependency Confusion\\in AI Package Ecosystems}

\author{Alan L. McCann\\
  \textit{Mashin, Inc.}\\
  \texttt{research@mashin.live}
}

\date{April 2026}

\begin{document}

\maketitle

\begin{abstract}
Dependency confusion attacks exploit a structural gap in software distribution: once a package is installed, there is no cryptographic proof of which registry distributed it. Every existing defense is configuration-based and fails silently when misconfigured. We present a cryptographic distribution provenance system comprising three components: (1)~\emph{cryptographic registry identity}, where every registry holds an Ed25519 keypair and signs every artifact it distributes; (2)~a \emph{dual-signature model}, where the publisher signs at packaging time and the registry countersigns at publication time; and (3)~\emph{authoritative namespace binding}, where consumers pin registry fingerprints and the resolver cryptographically rejects artifacts from unauthorized registries. These create three defense layers requiring simultaneous compromise for a successful attack. A comparison across eight ecosystems (npm, Cargo, Hex.pm, PyPI, Go modules, Docker/OCI, NuGet, Maven) shows no existing ecosystem combines mandatory publisher signing, cryptographic registry identity, mandatory registry countersigning, and consumer-side cryptographic enforcement. The system extends to AI-generation provenance as a signed attribute and governance-enforced dependency resolution. A case study integrates distribution provenance with a three-layer runtime governance architecture, creating a four-phase lifecycle chain with no cryptographic gaps.
\end{abstract}

\section{Introduction}
\label{sec:introduction}

Software supply chain attacks have emerged as one of the most consequential threat categories in modern software engineering. The 2021 ua-parser-js incident affected over 7 million weekly npm downloads when an attacker gained access to the maintainer's account and published malicious versions~\cite{uaparserjs2021}. The 2018 event-stream attack inserted a targeted cryptocurrency theft module into a package with 1.5 million weekly downloads~\cite{eventstream2018}. The 2024 xz-utils backdoor demonstrated that even cryptographically signed packages are vulnerable when the signing key holder is socially engineered~\cite{xzutils2024}.

Among supply chain attack vectors, \emph{dependency confusion} occupies a distinctive position: it is structurally simple, reliably exploitable, and undefended at the artifact level in every major package ecosystem. First systematized by Birsan~\cite{birsan2021dependency}, the attack exploits a fundamental architectural gap: once a package is installed on a consumer's system, there is no cryptographic proof of which registry distributed it.

\subsection{The Structural Gap}

Consider an enterprise running a private package registry at \texttt{registry.acme.com} alongside the public registry. Both can host packages under the \texttt{@acme} namespace: the private registry legitimately, the public registry through namespace squatting or misconfiguration. The enterprise's build system resolves \texttt{@acme/utils} by querying configured registries. If the public registry returns a higher version number, the resolver may select it over the private registry's version.

Every existing mitigation operates at the configuration layer:

\begin{itemize}
  \item \textbf{npm}: \texttt{.npmrc} \texttt{registry} and \texttt{@scope:registry} settings
  \item \textbf{PyPI/pip}: \texttt{--index-url} and \texttt{--extra-index-url} flags
  \item \textbf{Go modules}: \texttt{GOPROXY} environment variable
  \item \textbf{Maven}: \texttt{<repositories>} XML configuration in \texttt{pom.xml}
  \item \textbf{NuGet}: \texttt{nuget.config} package source mappings (added 2022)
\end{itemize}

If the configuration is incorrect, missing, or overridden by a CI/CD environment variable, the attack succeeds silently. The artifact on disk carries no evidence that would allow post-installation verification of its distribution path.

\subsection{Contributions}

This paper makes the following contributions:

\begin{enumerate}
  \item \textbf{Cryptographic registry identity} (Section~\ref{sec:registry-identity}): every registry instance holds an Ed25519 keypair and publishes its identity at a well-known endpoint. The registry signs every artifact it accepts, creating an artifact-level cryptographic proof of the distribution path.

  \item \textbf{A dual-signature distribution model} (Section~\ref{sec:dual-signature}): published artifacts carry two independent signatures (a publisher signature created at packaging time and a registry attestation created at publication time) with countersigning semantics that preserve temporal ordering and independent claims.

  \item \textbf{Authoritative namespace binding} (Section~\ref{sec:namespace-binding}): consumers declare which registries are authoritative for which namespaces and pin registry key fingerprints. The package resolver cryptographically rejects artifacts from unauthorized registries.

  \item \textbf{A two-layer archive format} (Section~\ref{sec:archive-format}): a provenance envelope (uncompressed) wrapping source contents (compressed), enabling verification without full decompression.

  \item \textbf{A six-level verification chain} (Section~\ref{sec:verification-chain}): progressive verification from file integrity through publisher authenticity, registry attestation, and lineage provenance.

  \item \textbf{A systematic ecosystem comparison} (Section~\ref{sec:comparison}): analysis of eight major ecosystems showing no existing system combines all four capabilities (mandatory signing, registry identity, registry countersigning, consumer enforcement).

  \item \textbf{Case study: integration with runtime governance} (Section~\ref{sec:integration}): we demonstrate integration with a three-layer structural governance architecture~\cite{mccann2026structural}, extending it to a four-phase lifecycle chain with no cryptographic gaps.

\end{enumerate}

\noindent Two extensions (AI-generation provenance and governance-enforced dependency resolution) are discussed as future work directions in Section~\ref{sec:discussion}.

\paragraph{Scope and companion papers.}
This paper addresses the distribution phase of the governance lifecycle: proving which registry distributed an artifact. It completes a four-phase chain from build provenance through distribution to runtime governance.
\cite{mccann2026structural}~establishes the structural governance architecture that this paper extends to the supply chain.
\cite{mccann2026mechanized}~mechanizes the runtime safety properties (454 theorems in Rocq) that govern artifacts after installation.
\cite{mccann2026gcc}~proves that the four governed primitives are expressively complete and semantically transparent; registry verification operates at the syntactic level (cryptographic properties), where decidability holds.
\cite{mccann2026algebraic}~lifts these results to a parametric algebraic framework; the coterminous boundary principle applies equally to distribution verification: every package either passes full verification or is rejected.
\cite{mccann2026purity}~addresses the executor-level gap: purity certificates issued at build time compose with the distribution signatures described here to form a complete trust chain from source to execution.

\subsection{Paper Structure}

Section~\ref{sec:related} surveys related work in package security and supply chain provenance. Section~\ref{sec:threat-model} defines the threat model. Sections~\ref{sec:archive-format}--\ref{sec:verification-chain} present the architecture. Section~\ref{sec:formal} formalizes security properties. Section~\ref{sec:comparison} compares against existing ecosystems. Section~\ref{sec:integration} describes integration with runtime governance. Section~\ref{sec:evaluation} evaluates performance and security. Section~\ref{sec:discussion} discusses limitations, future work, and preliminary extensions. Section~\ref{sec:conclusion} concludes.

\section{Related Work}
\label{sec:related}

\subsection{Package Signing}

Package signing has evolved from optional to increasingly mandatory across ecosystems. npm introduced Sigstore-based provenance attestations in 2023~\cite{npm-provenance2023}, providing keyless signing tied to CI/CD identities via OIDC federation. PyPI adopted PEP~740 attestations in 2024~\cite{pep740}, also Sigstore-based. NuGet has supported X.509 author signatures since 2018 and repository countersignatures since 2019~\cite{nuget-signing}. Hex.pm signs packages with a per-repository key inside a protobuf envelope~\cite{hex-registry}. Cargo does not natively sign packages but the third-party \texttt{cargo-vet} tool provides supply chain auditing~\cite{cargo-vet}.

These efforts focus on \emph{publisher identity}, proving who built the artifact. None except NuGet and Hex address \emph{registry identity}: proving which registry distributed it. Even NuGet's repository countersignature is optional and not enforced during package installation by default~\cite{nuget-signing}.

\subsection{Sigstore and Keyless Signing}

Sigstore~\cite{sigstore2022} eliminated long-lived signing keys entirely. Publishers authenticate via OIDC (e.g., GitHub Actions identity), receive a short-lived certificate from Fulcio, sign the artifact, and the signature is recorded in the Rekor transparency log. Consumers verify by checking the transparency log rather than managing a trusted key set.

Sigstore excels at proving publisher identity in CI/CD-centric workflows. It does not address registry provenance: the transparency log records \emph{who signed}, not \emph{which registry distributed}. An artifact signed via Sigstore in a GitHub Actions workflow and uploaded to two different registries produces identical signatures; the distribution path is invisible.

\subsection{Software Bill of Materials and SLSA}

SLSA (Supply-chain Levels for Software Artifacts)~\cite{slsa2021} defines a maturity framework for supply chain security, from basic build provenance (Level~1) to hermetic builds with two-party review (Level~4). SLSA provenance attestations record the build environment and source repository but do not address registry-level provenance; \emph{how the artifact moved from build to consumer} is outside SLSA's scope.

SBOM standards (SPDX~\cite{spdx2021}, CycloneDX~\cite{cyclonedx2022}) inventory components within artifacts but do not provide cryptographic proof of distribution path. An SBOM answers ``what's in this artifact?'' but not ``which registry distributed this artifact?''

\subsection{Dependency Confusion Attacks}

Birsan~\cite{birsan2021dependency} systematized the dependency confusion attack in February 2021, demonstrating successful exploitation against Apple, Microsoft, PayPal, and 30+ other organizations. The attack exploits the resolution behavior of package managers that query multiple registries: if the public registry hosts a package with the same name as a private package but a higher version number, the resolver selects the public (attacker-controlled) package.

Subsequent defenses have been configuration-based: npm added scope-to-registry mappings in \texttt{.npmrc}~\cite{npm-scope-registry}, NuGet added package source mappings~\cite{nuget-source-mapping}, and organizations adopted internal proxying (Artifactory, Nexus) to control resolution. All defenses operate at the resolver configuration layer, not at the artifact level.

\subsection{Transparency Logs}

Certificate Transparency~\cite{ct-rfc6962} pioneered the use of append-only Merkle tree logs for detecting misissued TLS certificates. The Go module ecosystem's \texttt{sum.golang.org}~\cite{go-sumdb} applies this concept to package checksums: a transparency log records the expected checksum of every module version, and clients verify downloaded modules against the log. This prevents \emph{modification} of existing versions (a published module cannot be silently replaced) but does not address \emph{which registry served the module} or \emph{dependency confusion} (the log records all modules, including attacker-published ones).

Sigstore's Rekor transparency log~\cite{sigstore2022} records signing events. As with \texttt{sum.golang.org}, the log provides \emph{append-only evidence} but not \emph{registry-level provenance}.

\subsection{Secure Update and Distribution Frameworks}

\textbf{The Update Framework} (TUF)~\cite{samuel2010survivable} provides a framework for secure software updates that survives key compromise through threshold signing, role separation, and key rotation. TUF's threat model assumes a single trusted repository and focuses on update integrity: ensuring that downloaded updates are authentic and current. Our model addresses a complementary problem: when multiple registries exist, proving which registry distributed an artifact. TUF provides no mechanism for registry identity or namespace authority binding. The two frameworks could be composed: TUF securing updates within a single Kura instance, while our provenance chain secures the cross-registry distribution path.

\textbf{in-toto}~\cite{torres2019intoto} extends TUF to verify that each step in a software supply chain was performed by an authorized party. A supply chain \emph{layout} specifies the expected sequence of steps (build, test, package, sign); in-toto verifies that link metadata from each step matches the layout. in-toto addresses \emph{build provenance}: was each build step performed correctly? Our architecture addresses \emph{distribution provenance}: which registry published this artifact, and does it have namespace authority? in-toto could serve upstream of our provenance chain: an artifact's in-toto attestation proves how it was built, while our registry provenance proves how it was distributed.

\paragraph{Layered complementarity.} TUF and in-toto each operate at a different layer from the present work, and the three are complementary rather than competing. TUF ensures distribution integrity within a single repository: that the registry serves what it intends to serve, and that key compromise does not silently corrupt updates. The present system's dual-signature model addresses a distinct question: which \emph{specific} registry distributed a given artifact, and does that registry have namespace authority? in-toto verifies build process integrity, confirming that each step in the build pipeline was performed by an authorized party. The present system verifies runtime governance compliance, confirming that an artifact meets trust tier, effect permission, and composition level requirements at install time. The approaches compose naturally: in-toto attestations can prove how an artifact was built, TUF can secure updates within a single registry instance, and our provenance chain can prove the cross-registry distribution path. We adopt in-toto's attestation format~\cite{torres2019intoto} for evidence bundles in the provenance manifest, maintaining interoperability with existing supply chain tooling.

\textbf{Notary v2}~\cite{notaryv2} provides OCI artifact signing for container registries, enabling publishers to sign container images and consumers to verify signatures at pull time. Notary v2 addresses registry-level signing for containers specifically. Our model is registry-agnostic (not limited to OCI artifacts) and includes namespace authority binding, which Notary v2 does not provide. The dual-signature model (publisher signs, registry countersigns) is conceptually similar to NuGet's repository countersignature but applied at the distribution provenance level rather than the package metadata level.

Linux distribution package managers have decades of experience with registry-level signing. \textbf{Debian's Release signing}~\cite{debian2023secure} uses GPG to sign Release files, providing a chain from the archive key to individual package hashes. \textbf{RPM's GPG signing}~\cite{rpm2023signing} signs individual packages, with consumers configuring trusted GPG keys per repository. Both models implement a form of registry identity: the signing key identifies the distribution. Our model extends this concept with explicit namespace authority binding (which registry is authoritative for which namespaces), dual signatures (publisher and registry), and hash-chain provenance that links distribution to execution.

\subsection{Attack Taxonomies}

Ohm et al.~\cite{ohm2020backstabber} provide a systematic taxonomy of attacks on package repositories, cataloging attack vectors including typosquatting, dependency confusion, maintainer compromise, and build system exploitation. Their ``Backstabber's Knife Collection'' dataset enables quantitative analysis of attack prevalence. Our architecture addresses three of the four primary attack vectors: dependency confusion (through authoritative namespace binding), maintainer account compromise (through TOFU key pinning), and malicious package injection (through dual-signature verification). Typosquatting remains outside our scope, as it is a naming-layer attack rather than a distribution-layer attack.

Ladisa et al.~\cite{ladisa2023taxonomy} present the most comprehensive academic taxonomy of open-source supply chain attacks, organizing attacks along the supply chain lifecycle. Their framework identifies the distribution phase as a distinct attack surface. Our contribution provides structural defenses specifically for this phase, mapping to their taxonomy's ``Distribute'' category. The four-phase lifecycle (Section~\ref{sec:integration}) was designed with this attack surface taxonomy in mind.

\section{Threat Model}
\label{sec:threat-model}

We consider the following adversary capabilities:

\begin{definition}[Threat Model]
An adversary can:
\begin{enumerate}[label=(\alph*)]
  \item Publish artifacts to any public registry under any namespace not protected by organizational membership verification.
  \item Observe the names and versions of private packages used within a target organization (via leaked lock files, CI logs, error messages, or DNS queries to private registries).
  \item Compromise CI/CD environment variables, \texttt{.npmrc} files, or equivalent resolver configuration.
  \item Operate a malicious registry that mimics a legitimate registry's API.
\end{enumerate}
An adversary \emph{cannot}:
\begin{enumerate}[label=(\alph*),resume]
  \item Compromise the target organization's private registry server.
  \item Obtain the private Ed25519 keys of legitimate publishers or registries.
  \item Modify artifacts in transit (TLS is assumed for all registry communication).
\end{enumerate}
\end{definition}

\noindent This threat model covers dependency confusion (capabilities a--c), namespace squatting (a), and resolver hijacking (c--d). It does not cover insider threats with registry server access (e) or key compromise (f), which require orthogonal defenses (access control, key rotation, hardware security modules).

We define a successful attack as: the consumer's build system installs an artifact that was not published by an authorized publisher to an authorized registry for the artifact's namespace, without the build system detecting the substitution.

\section{Two-Layer Archive Format}
\label{sec:archive-format}

Published artifacts use a two-layer archive format separating the provenance envelope from the source contents:

\begin{lstlisting}[style=formal]
artifact-1.2.0.pkg (uncompressed outer tar)
$\vdash$ provenance.json            # Provenance manifest
$\vdash$ signature.json            # Publisher Ed25519 signature
$\vdash$ registry_attestation.json # Registry countersignature
$\vdash$ CHECKSUM                  # SHA-256 of contents.tar.gz
$\vdash$ contents.tar.gz           # Gzipped tar of source files
    $\vdash$ manifest.toml          # Project manifest
    $\vdash$ src/                   # Source files
    $\vdash$ types/                 # Type definitions
    $\vdash$ README.md
\end{lstlisting}

\subsection{Design Rationale}

The outer envelope is an \emph{uncompressed} tar archive. The inner contents are a \emph{gzipped} tar archive. This separation provides two properties:

\paragraph{Fast verification.} The registry reads the provenance manifest, verifies signatures, and generates its attestation without decompressing the source archive. On install, the consumer verifies the provenance envelope before extracting the inner archive. Verification cost is proportional to the small envelope ($\sim$2--5~KB), not the potentially large source archive.

\paragraph{Atomic provenance.} The provenance documents (provenance manifest, signatures, attestation, checksum) travel with the artifact in a single file. There is no separate ``provenance sidecar'' that could become detached from the artifact it describes. The artifact \emph{is} its provenance.

This design follows Hex.pm's two-layer approach~\cite{hex-registry}, extending it with the registry attestation document and the machine-generated provenance manifest. Figure~\ref{fig:archive} illustrates the complete archive structure and dual-signature flow.

\subsection{The Provenance manifest}

The provenance manifest (\texttt{provenance.json}) is a machine-generated metadata document computed during the packaging process. It contains:

\begin{itemize}
  \item \textbf{Artifact identity}: name, version, content hash (SHA-256 of canonical serialization of all source files), package identifier, namespace.
  \item \textbf{Module-level metadata}: for each component in the artifact (name, qualified name, description, input/output schemas, per-file SHA-256 checksum).
  \item \textbf{Dependencies}: declared dependency constraints.
  \item \textbf{Lineage}: parent version and content hash (creating a version chain), fork origin (if forked from another artifact), and an optional \emph{evolution anchor}: a reference to a specific event in an external append-only provenance ledger~\cite{mccann2026structural} that produced this version.
  \item \textbf{Build metadata}: timestamp, runtime version.
\end{itemize}

The provenance manifest bridges the human-authored project manifest (inside the compressed inner archive) and the cryptographically verifiable identity (in the uncompressed outer envelope).

\begin{definition}[Content Hash]
\label{def:content-hash}
The content hash of an artifact is:
\[
H_c = \text{SHA-256}(\text{canonical}(\textit{name} \| \textit{version} \| \textit{pkg\_id} \| \textit{sorted\_sources}))
\]
where $\text{canonical}(\cdot)$ produces a deterministic byte representation with sorted keys and normalized values. Semantically identical artifacts produce identical content hashes regardless of file system ordering or equivalent value representations.
\end{definition}

\begin{figure*}[t]
\centering
\begin{tikzpicture}[
  scale=0.92, every node/.style={transform shape},
  >=Stealth,
  box/.style={draw, rounded corners=2pt, minimum height=1.6em, minimum width=2.4cm, font=\footnotesize\ttfamily, fill=codebg},
  phase/.style={draw, rounded corners=4pt, minimum height=2em, minimum width=2.8cm, font=\small\bfseries, fill=blue!8},
  sig/.style={draw, rounded corners=2pt, minimum height=1.5em, font=\footnotesize\ttfamily, fill=yellow!15},
  arrow/.style={->, thick},
]


\node[draw, thick, rounded corners=6pt, minimum width=11.5cm, minimum height=4.8cm, fill=white, label={[font=\small\bfseries]above:Published Artifact (\texttt{artifact-1.2.0.pkg})}] (outer) at (0,0.8) {};

\node[font=\scriptsize\itshape, anchor=north east] at ($(outer.north east)+(-0.15,-0.15)$) {uncompressed outer tar};

\node[box] (prov) at (-3.5, 1.9) {provenance.json};
\node[sig] (sig) at (-3.5, 1.0) {signature.json};
\node[sig, fill=green!12] (att) at (-3.5, 0.1) {registry\_attestation.json};
\node[box] (check) at (-3.5, -0.8) {CHECKSUM};

\node[draw, thick, rounded corners=4pt, minimum width=4.5cm, minimum height=3.8cm, fill=blue!3, anchor=west] (inner) at (-0.3, 0.55) {};
\node[font=\footnotesize\bfseries, anchor=north] at ($(inner.north)+(0,-0.1)$) {contents.tar.gz};
\node[font=\scriptsize\itshape, anchor=north] at ($(inner.north)+(0,-0.45)$) {(compressed)};

\node[box, minimum width=3.2cm, fill=white] (manifest) at ($(inner.center)+(0, 0.4)$) {manifest.toml};
\node[box, minimum width=3.2cm, fill=white] (src) at ($(inner.center)+(0, -0.35)$) {src/};
\node[box, minimum width=3.2cm, fill=white] (types) at ($(inner.center)+(0, -1.05)$) {types/, README};


\node[phase, fill=blue!12, minimum width=2.2cm] (pub) at (-6, -2.8) {Publisher};
\node[phase, fill=green!10, minimum width=2.2cm] (reg) at (0, -2.8) {Registry};
\node[phase, fill=red!8, minimum width=2.2cm] (con) at (6, -2.8) {Consumer};

\draw[arrow, thick] (pub) -- node[above, font=\scriptsize, yshift=4pt] {$t_1$: package \& sign} (reg);
\draw[arrow, thick] (reg) -- node[above, font=\scriptsize, yshift=4pt] {$t_2$: verify \& countersign} (con);

\node[font=\scriptsize, align=center, anchor=north] at (pub.south) {packages artifact,\\signs provenance manifest};
\node[font=\scriptsize, align=center, anchor=north] at (reg.south) {verifies $\sigma_P$, adds\\registry attestation $\sigma_R$};
\node[font=\scriptsize, align=center, anchor=north] at (con.south) {verifies $\sigma_P$, $\sigma_R$,\\checks $F_R = F_{\mathit{pinned}}$};

\end{tikzpicture}
\caption{Two-layer archive format with dual-signature flow. The uncompressed outer tar contains the provenance envelope (provenance manifest, publisher signature, registry attestation, checksum) wrapping the compressed inner archive. The publisher signs at packaging time ($t_1$), the registry countersigns at publication time ($t_2 > t_1$), and the consumer verifies both signatures on install.}
\label{fig:archive}
\end{figure*}

\section{Cryptographic Registry Identity}
\label{sec:registry-identity}

\begin{definition}[Registry Identity]
\label{def:registry-identity}
A registry identity is a tuple $R = (r, u, K_R, N_R, p)$ where:
\begin{itemize}
  \item $r$ is a unique registry identifier
  \item $u$ is the registry URL
  \item $K_R = (k_R^{\text{pub}}, k_R^{\text{priv}})$ is an Ed25519 keypair
  \item $N_R \subseteq \mathcal{N}$ is the set of namespaces the registry claims authority over
  \item $p$ is an optional parent registry identifier (for federated resolution)
\end{itemize}
\end{definition}

Every registry instance (public, private, or organization-scoped) generates an Ed25519 keypair on initialization. The private key is stored on the registry server. The public key and identity metadata are published at a well-known endpoint:

\begin{lstlisting}[style=json]
GET /.well-known/package-registry.json

{
  "registry_id": "reg_01JMXYZ...",
  "registry_url": "https://registry.acme.com",
  "public_key": "ed25519:<base64 public key>",
  "key_fingerprint": "sha256:<hex of SHA-256(public key)>",
  "namespaces": ["@acme", "@acme-internal"],
  "parent_registry": "reg_01JPUBLIC...",
  "key_valid_from": "2026-01-15T00:00:00Z",
  "key_rotation_history": []
}
\end{lstlisting}

\paragraph{Key fingerprint.} The key fingerprint $F_R = \text{SHA-256}(k_R^{\text{pub}})$ provides a short, human-verifiable identifier for the registry's identity. Consumers pin this fingerprint in their project configuration (Section~\ref{sec:namespace-binding}).

\paragraph{Namespace claims.} Each registry declares which namespaces it is authoritative for. A private registry at \texttt{registry.acme.com} declares authority over \texttt{@acme}. The public registry declares authority over \texttt{@stdlib} and \texttt{@community}. Namespace claims are part of the published identity and are verifiable by consumers.

\paragraph{Trust hierarchy.} Private registries can declare a parent registry, creating a tree: public $\to$ organization $\to$ team. When a private registry receives a query for a namespace outside its claims, it can proxy to its parent. The parent relationship is declared in the registry's published identity.

\paragraph{Key rotation.} The registry identity document includes a \texttt{key\_rotation\_history} array recording previous keys with validity periods. Consumers verifying a registry attestation check the \texttt{accepted\_at} timestamp against key validity periods to determine which key was active at attestation time. This enables key rotation without invalidating previously published artifacts.

\section{Dual-Signature Distribution Model}
\label{sec:dual-signature}

Every published artifact carries two independent cryptographic signatures.

\begin{definition}[Publisher Signature]
\label{def:publisher-sig}
During packaging, the publisher computes:
\[
\sigma_P = \text{Ed25519-Sign}(k_P^{\text{priv}},\; \text{canonical}(B))
\]
where $k_P^{\text{priv}}$ is the publisher's Ed25519 private key and $B$ is the provenance manifest. The signature document \texttt{signature.json} contains $\sigma_P$, the publisher's key fingerprint $F_P = \text{SHA-256}(k_P^{\text{pub}})$, and the signing timestamp.
\end{definition}

\begin{definition}[Registry Attestation]
\label{def:registry-attestation}
Upon accepting an artifact for publication, the registry:
\begin{enumerate}
  \item Verifies $\sigma_P$ against the publisher's registered public key.
  \item Verifies $H_c$ by recomputing from extracted source files.
  \item Verifies the publisher is authorized for the artifact's namespace.
  \item Optionally performs security scanning.
  \item Generates an attestation document $A$ containing:
    \begin{itemize}
      \item Registry identity: $r$, $u$, $F_R$
      \item Provenance manifest hash: $\text{SHA-256}(\text{canonical}(B))$
      \item Namespace, artifact name, version
      \item Acceptance timestamp
      \item Verification results (which checks were performed and passed)
    \end{itemize}
  \item Computes: $\sigma_R = \text{Ed25519-Sign}(k_R^{\text{priv}},\; \text{canonical}(A))$
\end{enumerate}
The registry injects \texttt{registry\_attestation.json} (containing $A$ and $\sigma_R$) into the outer envelope before storing the artifact.
\end{definition}

\subsection{Why Countersigning}

The dual-signature model uses \emph{countersigning} semantics: the publisher signs first, then the registry countersigns. This is architecturally superior to co-signing for three reasons:

\paragraph{Temporal ordering.} The publisher's signature proves they built the artifact at time $t_1$. The registry's attestation proves it accepted the artifact at time $t_2 > t_1$. These are temporally distinct events. Co-signing would conflate them into a single timestamp.

\paragraph{Independent claims.} The publisher asserts: ``I built this artifact with these contents.'' The registry asserts: ``I received this artifact, verified it, and am distributing it.'' These are logically independent; the publisher's claim is about authorship, the registry's about distribution. Independent assertions should carry independent signatures.

\paragraph{Offline packaging.} A developer can package and sign an artifact without network access. The registry attestation is added when they publish (online). Co-signing would require both parties online simultaneously.

\subsection{Comparison with NuGet}

NuGet~\cite{nuget-signing} is the closest prior art. NuGet supports author signatures (X.509) and repository countersignatures (X.509). The present system differs in three ways:

\begin{enumerate}
  \item \textbf{Mandatory vs.\ optional.} NuGet's repository countersignature is optional. In our system, every published artifact must carry a registry attestation.
  \item \textbf{Ed25519 vs.\ X.509.} Ed25519 keys are 32 bytes, have no certificate chain, and require no PKI infrastructure. X.509 certificates require certificate authorities, chain validation, and revocation checking (CRL/OCSP).
  \item \textbf{Enforced on install.} NuGet does not verify repository countersignatures during package installation by default. Our system verifies the registry attestation on every install and, when authoritative namespace binding is configured, rejects artifacts from unauthorized registries.
\end{enumerate}

\section{Authoritative Namespace Binding}
\label{sec:namespace-binding}

The dual-signature model provides cryptographic \emph{evidence} of the distribution path. Authoritative namespace binding provides cryptographic \emph{enforcement}: the consumer's system rejects artifacts not distributed through the expected registry.

\subsection{Consumer Configuration}

Consumers declare trusted registries and their namespace authority in the project manifest:

\begin{lstlisting}[style=toml]
# project manifest (e.g., manifest.toml)
[registries.acme]
url = "https://registry.acme.com"
fingerprint = "sha256:a1b2c3d4e5f6..."
namespaces = ["@acme", "@acme-internal"]
priority = "authoritative"

[registries.public]
url = "https://registry.example.com"
namespaces = ["@stdlib", "@community"]
\end{lstlisting}

\begin{definition}[Authoritative Binding]
\label{def:authoritative}
A namespace $n$ is \emph{authoritatively bound} to registry $R$ in project $P$ if $P$'s manifest declares $R$ with priority \texttt{authoritative} and $n \in N_R$ where $N_R$ is $R$'s declared namespace set. When $n$ is authoritatively bound to $R$:
\begin{enumerate}
  \item The resolver queries only $R$ for artifacts in namespace $n$.
  \item Artifacts in namespace $n$ found on any other registry are rejected.
  \item The registry attestation's fingerprint $F_R$ must match the pinned fingerprint in $P$'s manifest.
\end{enumerate}
\end{definition}

\subsection{TOFU Key Pinning}

The fingerprint field implements Trust On First Use (TOFU), analogous to SSH's \texttt{known\_hosts}~\cite{ssh-rfc4251}. On first configuration, the consumer pins the registry's key fingerprint. Subsequent key changes (from rotation, compromise, or man-in-the-middle) cause verification failure until the consumer explicitly updates the pinned fingerprint.

\begin{definition}[TOFU Verification]
\label{def:tofu}
Given a registry attestation with fingerprint $F_A$ and a consumer's pinned fingerprint $F_{\mathit{pinned}}$:
\[
\text{verify\_registry}(F_A, F_{\mathit{pinned}}) =
\begin{cases}
\text{accept} & \text{if } F_A = F_{\mathit{pinned}} \\
\text{reject} & \text{if } F_A \neq F_{\mathit{pinned}}
\end{cases}
\]
\noindent (We write $F_{\mathit{pinned}}$ to distinguish the consumer's pinned registry fingerprint from $F_P$, the publisher's key fingerprint defined in Definition~\ref{def:publisher-sig}.)
\end{definition}

\subsection{Resolution with Enforcement}

When the package resolver processes a dependency on artifact $a$ in namespace $n$:

\begin{enumerate}
  \item Consult the project manifest's \texttt{[registries]} configuration for namespace $n$.
  \item If $n$ is authoritatively bound to registry $R$: query only $R$. If $n$ has no explicit binding: query the default registry.
  \item Download the artifact. Read the outer envelope without decompressing the inner archive.
  \item Verify the publisher signature: $\text{Ed25519-Verify}(k_P^{\text{pub}}, \text{canonical}(B), \sigma_P)$.
  \item Verify the registry attestation:
    \begin{enumerate}
      \item $\text{Ed25519-Verify}(k_R^{\text{pub}}, \text{canonical}(A), \sigma_R)$
      \item If $n$ is authoritatively bound: $F_R = F_{\mathit{pinned}}$ (TOFU check)
      \item Namespace in attestation matches $n$
    \end{enumerate}
  \item If any verification fails, reject the artifact and report the failure.
  \item Extract the inner archive. Verify content hash $H_c$ by recomputation. Verify per-file checksums.
\end{enumerate}

\subsection{Three-Layer Cryptographic Defense}

The combination of registry-side namespace enforcement, consumer-side authoritative binding, and artifact-level registry attestation creates three defense layers against dependency confusion (Figure~\ref{fig:defense}):

\begin{description}
  \item[Layer 1: Namespace enforcement at the registry.] Publishing to namespace \texttt{@acme} on the public registry requires verified membership in the \texttt{@acme} organization. An external attacker cannot publish under \texttt{@acme} without organizational membership.

  \item[Layer 2: Authoritative binding at the consumer.] The consumer's project manifest declares \texttt{@acme} as authoritative for \texttt{registry.acme.com}. Even if an attacker publishes \texttt{@acme/utils} on the public registry (bypassing Layer~1), the consumer's resolver rejects it, because \texttt{@acme} is bound to a different registry.

  \item[Layer 3: Registry attestation at the artifact.] The installed artifact carries the distributing registry's Ed25519 signature with fingerprint $F_R$. Even if Layers 1 and 2 are bypassed (registry compromise and consumer misconfiguration), the artifact's attestation fingerprint does not match the consumer's pinned fingerprint for \texttt{@acme}.
\end{description}

\begin{figure}[t]
\centering
\begin{tikzpicture}[
  >=Stealth,
  layer/.style={draw, thick, rounded corners=4pt, minimum width=6.5cm, minimum height=1.6em, font=\small},
  attacker/.style={draw, thick, fill=red!10, rounded corners=3pt, minimum height=1.6em, font=\small\bfseries},
  blocked/.style={font=\small\bfseries\color{red!70!black}},
  arrow/.style={->, thick},
]

\node[attacker] (atk) at (0, 4.5) {Attacker publishes \texttt{@acme/utils}};

\node[layer, fill=blue!8] (l1) at (0, 3.0) {Layer 1: Namespace enforcement (registry)};
\node[blocked, anchor=west] (b1) at (3.8, 3.0) {\texttimes\ Blocked};

\node[layer, fill=green!8] (l2) at (0, 1.5) {Layer 2: Authoritative binding (consumer)};
\node[blocked, anchor=west] (b2) at (3.8, 1.5) {\texttimes\ Blocked};

\node[layer, fill=yellow!12] (l3) at (0, 0.0) {Layer 3: Registry attestation (artifact)};
\node[blocked, anchor=west] (b3) at (3.8, 0.0) {\texttimes\ Blocked};

\node[draw, thick, fill=green!15, rounded corners=4pt, minimum width=6.5cm, minimum height=1.6em, font=\small\bfseries] (target) at (0, -1.5) {Consumer's build system};

\draw[arrow, red!60!black, dashed] (atk) -- (l1);
\draw[arrow, red!60!black, dashed] (l1) -- node[right, font=\scriptsize, xshift=2pt] {if bypassed} (l2);
\draw[arrow, red!60!black, dashed] (l2) -- node[right, font=\scriptsize, xshift=2pt] {if bypassed} (l3);
\draw[arrow, red!60!black, dashed] (l3) -- node[right, font=\scriptsize, xshift=2pt] {blocked by $\sigma_R$} (target);

\node[font=\Large\bfseries\color{red!70!black}] at (0, -0.75) {\texttimes};

\end{tikzpicture}
\caption{Three-layer cryptographic defense against dependency confusion. Each layer operates independently: namespace enforcement at the registry's API, authoritative binding in the consumer's resolver, and registry attestation at the artifact's cryptographic signature. Under the threat model (Section~\ref{sec:threat-model}), $P(L_3) = 0$ (key compromise excluded), making the attack probability zero regardless of $P(L_1)$ and $P(L_2)$.}
\label{fig:defense}
\end{figure}

\begin{theorem}[Dependency Confusion Resistance]
\label{thm:dep-confusion}
Under the threat model of Section~\ref{sec:threat-model}, a dependency confusion attack against an artifact in an authoritatively bound namespace requires compromise of both the consumer's resolver configuration and the target registry's signing key. Specifically:
\[
P(\text{attack}) \leq P(L_{\text{config}}) \cdot P(L_3)
\]
where $P(L_{\text{config}})$ is the probability of resolver configuration compromise (encompassing both namespace enforcement and authoritative binding, which share a configuration surface), and $P(L_3)$ is the probability of possessing the target registry's Ed25519 private key. In the general case, $P(L_3) = \varepsilon$ for some small $\varepsilon > 0$ reflecting imperfect key protection; under our strict threat model (key compromise excluded), $\varepsilon = 0$ and $P(\text{attack}) = 0$. Under configuration separation (Remark~\ref{rec:config-separation}), the three layers are fully independent and the tighter bound $P(\text{attack}) = P(L_1) \cdot P(L_2) \cdot P(L_3)$ holds.
\end{theorem}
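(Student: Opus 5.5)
The argument is a reduction: any successful attack, in the sense fixed at the end of Section~\ref{sec:threat-model}, forces two events $E_{\text{config}}$ and $E_3$ to occur. The probability bound then follows from an independence argument. Fix an artifact $a$ in a namespace $n$ that is authoritatively bound to registry $R$ with pinned fingerprint $F_{\mathit{pinned}} = F_R$ (Definition~\ref{def:authoritative}). Suppose the consumer installs some $a'$ that was not published by an authorized publisher to $R$, and the substitution goes undetected.

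\textbf{Step 1 (reaching the consumer requires $E_{\text{config}}$).} I will walk through the resolution procedure. By clause~1 of Definition~\ref{def:authoritative}, the resolver queries only $R$ for namespace $n$. By capability~(e), the adversary cannot compromise $R$'s server, so $R$ never serves $a'$. For $a'$ to be fetched at all, the resolver must therefore query a different registry for $n$, or the namespace filter must be altered. Both require tampering with the manifest/resolver configuration, which is capability~(c). Namespace enforcement on the registry side cannot independently block an attacker-run registry (capability~(d)). So Layers~1 and~2 fall together exactly when the shared configuration surface is compromised, and I define $E_{\text{config}}$ as that event.

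\textbf{Step 2 (passing verification requires $E_3$).} Even given $E_{\text{config}}$, step~5 of the resolution procedure demands two things: a valid $\sigma_R$ over $\text{canonical}(A)$, and an attestation fingerprint equal to $F_{\mathit{pinned}}$ (Definition~\ref{def:tofu}). Because $F_R = \text{SHA-256}(k_R^{\text{pub}})$, collision resistance of SHA-256 means the adversary cannot substitute a different key with the same fingerprint. The verifying key is therefore $k_R^{\text{pub}}$. Producing a fresh valid signature on an attestation $A'$ for $a'$ then requires either forging Ed25519 under $k_R^{\text{pub}}$ or holding $k_R^{\text{priv}}$; I define $E_3$ as this event.

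The adversary might instead replay an existing genuine attestation. I rule this out as follows. The attestation binds $\text{SHA-256}(\text{canonical}(B))$. $B$ contains $H_c$ (Definition~\ref{def:content-hash}). $H_c$ is rechecked against the extracted sources in step~7. Under collision resistance, a replayed $A$ can only validate the genuine $a$, not $a'$. If the consumer alters the pin so that $F_{\mathit{pinned}}$ no longer equals $F_R$, that is itself part of $E_{\text{config}}$, but an attack in this namespace still needs a valid signature under the pinned key. I will note here that this is where $P(L_3)$ is understood as "possessing the key or forging," up to negligible terms.

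\textbf{Step 3 (the bound).} Steps~1 and~2 give $\text{attack} \subseteq E_{\text{config}} \cap E_3$. Configuration compromise is an operational event, while key possession depends on server-side key storage. Treating the two as independent, I get $P(\text{attack}) \le P(E_{\text{config}})\,P(E_3)$. Setting $P(E_3)=\varepsilon$ gives the general statement. Capability~(f) forces $\varepsilon = 0$, hence $P(\text{attack})=0$. Under configuration separation (Remark~\ref{rec:config-separation}), $E_{\text{config}}$ splits into independent events $E_1$ and $E_2$, which yields the product of three factors.

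\textbf{Main obstacle.} The delicate part is the independence assumption, together with the claim that the bound "$=$" in the separated case is an equality rather than an upper bound. Independence is a modelling assumption, not something the cryptography proves. I would state it explicitly as a hypothesis, and I would present the three-factor expression as an upper bound that is attained only when every layer-wise compromise suffices jointly.
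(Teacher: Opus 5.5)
Your route is the paper's: non-possession of $k_R^{\text{priv}}$ under assumption~(f), plus an independence assumption between configuration compromise and key compromise. But Step~2 has a genuine hole, at exactly the point you flag.

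Step~2 opens with ``even given $E_{\text{config}}$, step~5 demands \ldots{} an attestation fingerprint equal to $F_{\mathit{pinned}}$.'' Yet $F_{\mathit{pinned}}$ sits in the same \texttt{[registries]} block that capability~(c) lets the adversary edit. So does the \texttt{authoritative} flag that decides whether resolution step~5(b) runs at all. The paper's own proof places the fingerprints in \texttt{mashin.toml}. Using (c) and (d), the adversary does three things:
\begin{enumerate}
  \item repoints $n$ to a registry it operates;
  \item re-pins the fingerprint to that registry's key;
  \item countersigns $a'$ itself.
\end{enumerate}
The paper specifies no consumer-side pin for $k_P^{\text{pub}}$, so step~4 does not stop this either. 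The namespace is still authoritatively bound, now to the adversary's registry, and every check passes. Yet $E_3$ never occurs, so $\text{attack} \subseteq E_{\text{config}} \cap E_3$ is false.

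Your remark that the attack ``still needs a valid signature under the pinned key'' does not rescue it, because the pinned key is now the adversary's. If you instead redefine $E_3$ as holding whichever key is pinned, then $E_{\text{config}}$ implies $E_3$, and the independence used in Step~3 collapses.

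This is a counterexample, not a presentational gap. The scenario gives $P(\text{attack}) > 0$ using only capabilities (c) and (d), while the theorem asserts $P(\text{attack}) = 0$ under the strict threat model. The two-factor bound therefore needs an explicit extra hypothesis: the pin, and whatever triggers the fingerprint check, must lie outside the reach of (c), so that $E_{\text{config}}$ can only redirect resolution. That is what Remark~\ref{rec:config-separation} supplies. So the Remark is already needed for the two-factor claim, not only for the three-factor one.

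The paper's proof does not fill this hole either; it assumes it away. It declares $L_3$ ``independent of consumer configuration'' while putting the fingerprints in that very configuration.

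Once you state the hypothesis explicitly, your argument goes through and is tighter than the paper's. Two parts are genuine improvements:
\begin{itemize}
  \item The replay argument via $H_c$. It tacitly relies on the consumer checking that the manifest hash inside $A$ matches the shipped $B$. Theorem~\ref{thm:tamper-evidence} presupposes that check, but the resolution procedure does not list it.
  \item Your correction of ``$=$'' to ``$\le$'' in the separated case.
\end{itemize}
Note also that once containment holds, $P(\text{attack}) = 0$ under (f) follows from $P(E_3) = 0$ alone, with no independence assumption needed.
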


\begin{proof}
By the threat model, the adversary cannot obtain the private registry's Ed25519 key (assumption f). The registry attestation's signature $\sigma_R$ is verified against the registry's public key. An artifact from any other registry (including the adversary's) carries a different registry's signature with a different fingerprint $F_R' \neq F_{\mathit{pinned}}$. TOFU verification (Definition~\ref{def:tofu}) rejects the artifact. Since $P(L_3) = 0$ under our assumptions, $P(L_{\text{config}}) \cdot P(L_3) = 0$ regardless of $P(L_{\text{config}})$.

Layers $L_1$ (namespace enforcement) and $L_2$ (authoritative binding) share a configuration surface: both namespace bindings and TOFU fingerprints reside in the consumer's project configuration file (\texttt{mashin.toml}). An attacker who can modify resolver configuration may weaken both layers simultaneously, so we conservatively bound them as a single factor $P(L_{\text{config}})$. Layer $L_3$ (registry attestation) operates on the artifact's cryptographic signature, independent of consumer configuration. This yields $P(\text{attack}) \leq P(L_{\text{config}}) \cdot P(L_3)$.

Under configuration separation (Remark~\ref{rec:config-separation}), namespace bindings and TOFU fingerprints are stored in separate files, making all three layers independent: $P(\text{attack}) = P(L_1) \cdot P(L_2) \cdot P(L_3)$.
\end{proof}

\begin{remark}[Configuration Separation]
\label{rec:config-separation}
To achieve full layer independence, we recommend storing TOFU fingerprints in a separate integrity-protected file (analogous to SSH's \texttt{known\_hosts}), distinct from the project's resolver configuration. This ensures that an attacker who modifies namespace bindings in \texttt{mashin.toml} cannot simultaneously subvert fingerprint verification. Under this separation, the three defense layers are genuinely independent, and the multiplicative probability bound holds.
\end{remark}

\section{Six-Level Verification Chain}
\label{sec:verification-chain}

The archive format, signatures, and namespace binding compose into a six-level verification chain where each level builds on the previous:

\begin{enumerate}
  \item \textbf{File Integrity.} Each source file's SHA-256 matches per-file checksums in the provenance manifest. \emph{Proves:} no individual file was corrupted or substituted.

  \item \textbf{Artifact Identity.} Content hash $H_c$ recomputed from extracted files matches the provenance manifest. \emph{Proves:} the artifact is a specific, deterministic snapshot.

  \item \textbf{Publisher Authenticity.} Ed25519 signature $\sigma_P$ verifies against the publisher's public key. \emph{Proves:} the claimed publisher actually signed this provenance manifest.

  \item \textbf{Envelope Integrity.} CHECKSUM matches SHA-256 of \texttt{contents.tar.gz}. \emph{Proves:} the compressed archive was not modified after packaging.

  \item \textbf{Registry Attestation.} Ed25519 signature $\sigma_R$ verifies against the registry's public key. Fingerprint $F_R$ matches pinned fingerprint (if configured). \emph{Proves:} a specific, identified registry distributed this artifact.

  \item \textbf{Lineage Provenance} (optional). The provenance manifest's evolution anchor references a specific event in an external provenance ledger. The event hash is verified against the ledger's hash chain. \emph{Proves:} the artifact's complete development history (creation, edits, tests, promotions) is tamper-evident and auditable. This level requires an external provenance system (see Section~\ref{sec:integration} for a case study).
\end{enumerate}

\paragraph{Verification modes.} Levels 1--3 are verified on every install (fast, offline-capable). Levels 4--5 are verified in \emph{strict} mode. Level 6 is optional and verified in \emph{full} mode (requires network access to query an external provenance ledger). The core contribution of this paper is Levels 1--5; Level 6 demonstrates extensibility to systems with development-phase provenance.

\begin{proposition}[Verification Composability]
Each verification level $V_i$ produces a boolean result. The composite verification result is $V = \bigwedge_{i=1}^{n} V_i$ where $n \in \{3, 5, 6\}$ depends on the verification mode. Each level's verification is independent: failure at level $i$ does not prevent verification at level $j \neq i$. This enables precise error reporting (``publisher signature valid but registry attestation fingerprint mismatch'').
\end{proposition}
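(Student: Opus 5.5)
The plan is to model each level $V_i$ as a total boolean predicate over a fixed set of inputs read from the artifact and the consumer configuration, and then derive the three claims (the conjunction form, independence, and precise error reporting) directly from that model. First I will fix the input of each level: $V_1$ reads the extracted source files and the per-file checksums in $B$; $V_2$ reads the extracted files and $H_c$ as recorded in $B$ (Definition~\ref{def:content-hash}); $V_3$ reads $\text{canonical}(B)$, $\sigma_P$ and $k_P^{\text{pub}}$ (Definition~\ref{def:publisher-sig}); $V_4$ reads \texttt{CHECKSUM} and the bytes of \texttt{contents.tar.gz}; $V_5$ reads $\text{canonical}(A)$, $\sigma_R$, $k_R^{\text{pub}}$ and $F_{\mathit{pinned}}$ (Definitions~\ref{def:registry-attestation} and~\ref{def:tofu}); $V_6$ reads the evolution anchor in $B$ and the external ledger. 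Each of these is a terminating computation: SHA-256 evaluations, Ed25519-Verify, equality tests, and a hash-chain check. Each therefore yields a value in $\{\text{true},\text{false}\}$, with any parse or lookup failure mapped to false.

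Second, I will define the composite result for mode $n\in\{3,5,6\}$ as $V=\bigwedge_{i=1}^n V_i$. This is the definition of acceptance implicit in step~6 of the resolution procedure, where \emph{any} failure rejects the artifact. Commutativity and associativity of $\wedge$ make $V$ independent of the order in which the levels are evaluated.

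Third, and this is the main obstacle, comes independence. The key observation is that no $V_j$ takes the \emph{output} of another $V_i$ as an input: each reads only raw artifact or configuration data. Thus all $n$ predicates can be evaluated, and none is short-circuited. The subtlety is that several levels share inputs, such as $B$ (used by levels 1, 2, 3 and 6) and the extracted files (levels 1 and 2), and extraction itself may fail. I would handle this by treating extraction and parsing as preprocessing whose failure is recorded as false only for the levels that need that particular input. For example, a corrupted \texttt{contents.tar.gz} falsifies $V_1$, $V_2$ and $V_4$, but leaves $V_3$ and $V_5$ computable from the uncompressed envelope, which is exactly what the two-layer archive format guarantees. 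So independence is meant \emph{computationally} (every $V_i$ is evaluable regardless of the other results), not probabilistically. The statement's phrase ``each level builds on the previous'' refers to the semantic strength of what is proved, not to a data dependency.

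Finally, the error-reporting claim follows by returning the full vector $(V_1,\dots,V_n)$ rather than only $V$. Since every coordinate is computed, the consumer sees the exact set $\{i : V_i=\text{false}\}$; the quoted message corresponds to $V_3=\text{true}$ with the TOFU conjunct of $V_5$ false.
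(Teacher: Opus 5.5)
The paper gives no proof of this proposition; it is asserted as a design property right after the list of levels. Your proposal therefore supplies the missing justification rather than an alternative to one, and it is sound for the model you set up. Your route models each $V_i$ as a total predicate on raw envelope, archive and configuration data, and derives independence from the fact that no level consumes another level's output. This gives ``independent'' a precise computational meaning and separates it from the semantic sense in which ``each level builds on the previous.'' It also uses the two-layer format to show that $V_3$ and $V_5$ stay computable when \texttt{contents.tar.gz} is corrupt.

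Two things should be tightened. First, you show that a non-short-circuiting verifier \emph{can} be built, not that the paper's install path is one. In the resolution procedure of Section~\ref{sec:namespace-binding}, step~6 rejects before step~7 extracts the inner archive and checks $H_c$ and the per-file checksums. This matches the ``verify the envelope before extracting'' rationale of Section~\ref{sec:archive-format}. On that path, a failure of $V_3$ or $V_5$ does prevent $V_1$ and $V_2$ from running. Citing step~6 for the conjunction is fine, but independence should be stated as a requirement you place on the verifier (for instance a diagnostic or audit mode). That mode has a cost: it decompresses archives whose envelope has already failed.

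Second, the quoted message cannot be read off $(V_1,\dots,V_n)$, because $V_5=\text{false}$ does not say which conjunct failed. The candidates are the signature, the TOFU fingerprint, the namespace, and the check of $\text{SHA-256}(\text{canonical}(B))$ against $A$. That last check is implicitly needed by item~5 of Theorem~\ref{thm:tamper-evidence}, so $V_5$ also reads $B$, which your input list for it omits. Report those conjuncts separately; your no-shared-outputs argument justifies this one level down.
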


\section{Formal Properties}
\label{sec:formal}

We formalize the security properties of the distribution provenance system.

\begin{definition}[Distribution Provenance Record]
\label{def:provenance-record}
A distribution provenance record for artifact $a$ is a tuple:
\[
D(a) = (B, \sigma_P, F_P, A, \sigma_R, F_R, H_{\text{check}})
\]
where $B$ is the provenance manifest, $\sigma_P$ is the publisher signature, $F_P$ is the publisher fingerprint, $A$ is the registry attestation, $\sigma_R$ is the registry signature, $F_R$ is the registry fingerprint, and $H_{\text{check}}$ is the inner archive checksum.
\end{definition}

\begin{theorem}[Non-Repudiation]
\label{thm:non-repudiation}
Given a valid distribution provenance record $D(a)$:
\begin{enumerate}
  \item The publisher cannot deny having signed the provenance manifest: $\text{Ed25519-Verify}(k_P^{\text{pub}}, \text{canonical}(B), \sigma_P) = \text{true}$ uniquely identifies the holder of $k_P^{\text{priv}}$.
  \item The registry cannot deny having distributed the artifact: $\text{Ed25519-Verify}(k_R^{\text{pub}}, \text{canonical}(A), \sigma_R) = \text{true}$ uniquely identifies the holder of $k_R^{\text{priv}}$.
  \item Neither party can deny the temporal ordering: $t_P < t_R$ where $t_P$ is the publisher signing timestamp and $t_R$ is the attestation acceptance timestamp.
\end{enumerate}
\end{theorem}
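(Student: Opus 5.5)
The plan is to reduce each clause of Theorem~\ref{thm:non-repudiation} to the existential unforgeability (EUF-CMA) of Ed25519 under chosen-message attack, combined with the threat-model assumption (f) that private keys of legitimate publishers and registries are not obtainable by the adversary. Validity of $D(a)$ (Definition~\ref{def:provenance-record}) will be taken to mean that the verification chain of Section~\ref{sec:verification-chain} passes at least through Level~5. In particular, both $\text{Ed25519-Verify}$ checks succeed, and the fingerprints $F_P = \text{SHA-256}(k_P^{\text{pub}})$ and $F_R$ bind the verifying public keys. Collision resistance of SHA-256 will ensure that a fingerprint identifies a unique public key. I will state this as an explicit additional assumption, since the threat model does not list it.

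For clause~1, I will argue by contradiction via a reduction. Suppose some party other than the holder of $k_P^{\text{priv}}$ produced $\sigma_P$ with $\text{Ed25519-Verify}(k_P^{\text{pub}}, \text{canonical}(B), \sigma_P) = \text{true}$, and suppose the holder never signed $\text{canonical}(B)$. Then $(\text{canonical}(B), \sigma_P)$ is a forgery, so the adversary that output it breaks EUF-CMA with non-negligible probability. This contradicts the security of Ed25519. Hence, except with negligible probability, the holder of $k_P^{\text{priv}}$ signed $\text{canonical}(B)$. Determinism of $\text{canonical}(\cdot)$ (as in Definition~\ref{def:content-hash}) is needed so that the signed bytes unambiguously denote $B$. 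Clause~2 is the same argument applied to $(k_R, A, \sigma_R)$. It has one extra step: $A$ contains $\text{SHA-256}(\text{canonical}(B))$ together with the namespace, name and version (Definition~\ref{def:registry-attestation}), so the registry's signature commits to this particular manifest. Collision resistance again rules out substituting a different $B'$. "Distributed" is then read as "issued the attestation $A$", which is exactly what Definition~\ref{def:registry-attestation} prescribes the registry does upon acceptance.

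Clause~3 is the main obstacle. Signatures authenticate timestamps but do not make them truthful, because each party chooses its own timestamp. My approach is to argue the ordering from the protocol rather than from the clocks. $t_P$ lies inside $B$ (or inside \texttt{signature.json}, which I will require to be covered by the signed manifest). $A$ includes the hash of $\text{canonical}(B)$, and by step~1 of Definition~\ref{def:registry-attestation} the registry verified $\sigma_P$ before signing $A$. So the attestation causally depends on the existence of $\sigma_P$: the registry could not have hashed $B$ before $B$ existed, except with negligible probability. This gives a causal order, not a numerical one. To reach $t_P < t_R$ I will add the protocol requirement that the registry rejects submissions whose $t_P$ is not strictly earlier than its own acceptance time $t_R$, and that $t_R$ is signed under $\sigma_R$. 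Neither party can later deny the order. The publisher is bound to $t_P$ by clause~1, and the registry is bound both to $t_R$ and to having checked $t_P < t_R$ by clause~2. If the enforced check is not available in the protocol, I would weaken clause~3 to this causal statement and note the gap explicitly.
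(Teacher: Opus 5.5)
Your clauses 1 and 2 match the paper's proof, which is a one-line appeal to EU-CMA security of Ed25519. You make explicit several things the paper leaves implicit: the fingerprints $F_P, F_R$ bind the verifying keys, which needs SHA-256 collision resistance; $\text{canonical}(\cdot)$ is deterministic; and $A$ commits to this particular manifest through $\text{SHA-256}(\text{canonical}(B))$.

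The substantive difference is clause 3. The paper uses exactly your causal step: the registry cannot produce $A$ before $B$ exists, because $A$ contains a hash of $B$. It then treats this as directly ``establishing $t_P < t_R$.''

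That inference holds only if $t_P$ and $t_R$ denote the real-time instants of the signing and acceptance events. Even then, hashing $B$ shows only that $B$ existed, not that $\sigma_P$ did. That is why your appeal to step~1 of Definition~\ref{def:registry-attestation} (the registry verifies $\sigma_P$ before signing) is needed, and that step presupposes an honest registry.

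The wording ``publisher signing timestamp'' and ``attestation acceptance timestamp'' suggests $t_P, t_R$ are the recorded fields. Read that way, you are right that signatures authenticate these values but do not make them truthful, so the paper's argument does not yield the numerical inequality. Your fix is an explicit $t_P < t_R$ check, with $t_R$ signed under $\sigma_R$. This proves clause~3 for a strengthened protocol, not the one the paper specifies.

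Since both values are then committed to in the record, the comparison is better placed in consumer-side validity, for example at Level~5. A registry that skips its own check could still sign an $A$ with $t_R \le t_P$, so ``the registry is bound to having checked'' does not by itself guarantee the inequality for a valid record.

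You are also right that Definition~\ref{def:publisher-sig} puts the signing timestamp in \texttt{signature.json}, outside $\text{canonical}(B)$. The paper's claim that $B$ contains $t_P$ therefore holds only if $t_P$ is identified with the build timestamp in $B$'s build metadata.

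In short, the paper's route is shorter but silently identifies recorded timestamps with event times. Yours exposes that assumption and states what must be added to the protocol to discharge it.
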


\begin{proof}
Ed25519 is existentially unforgeable under chosen-message attacks (EU-CMA) assuming the hardness of the discrete logarithm problem on Curve25519~\cite{bernstein2012ed25519}. Under this assumption, $\sigma_P$ could only have been produced by the holder of $k_P^{\text{priv}}$, and $\sigma_R$ could only have been produced by the holder of $k_R^{\text{priv}}$. The temporal ordering follows from the countersigning structure: the registry generates $A$ containing $\text{SHA-256}(\text{canonical}(B))$ and $t_R$, and $B$ contains $t_P$ via the publisher's signing timestamp. The registry cannot generate $A$ before receiving $B$ (it hashes $B$), establishing $t_P < t_R$.
\end{proof}

\begin{theorem}[Tamper Evidence]
\label{thm:tamper-evidence}
Any modification to the artifact's contents after publication is detectable:
\begin{enumerate}
  \item Modifying any source file invalidates its per-file checksum (Level~1).
  \item Modifying any source file invalidates $H_c$ (Level~2).
  \item Modifying the provenance manifest invalidates $\sigma_P$ (Level~3).
  \item Modifying \texttt{contents.tar.gz} invalidates CHECKSUM (Level~4).
  \item Modifying any outer envelope document invalidates $\sigma_R$ (since $A$ contains $\text{SHA-256}(B)$) (Level~5).
\end{enumerate}
\end{theorem}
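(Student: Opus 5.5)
The plan is to reduce each of the five items to a standard cryptographic assumption: collision resistance (more precisely second-preimage resistance) of SHA-256, and EU-CMA security of Ed25519 as already used in Theorem~\ref{thm:non-repudiation}. We fix the adversary of Section~\ref{sec:threat-model}, who may rewrite any byte of the stored artifact after publication but holds neither $k_P^{\text{priv}}$ nor $k_R^{\text{priv}}$ (assumption f). A modification is \emph{undetected} only if every check at the relevant level still succeeds on the modified artifact. For each item we show that an undetected modification yields either a SHA-256 collision or an Ed25519 forgery. Both events are negligible, so detection holds except with negligible probability. The theorem should be read in this computational sense; ``invalidates'' cannot be literal, since collisions exist.

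The order follows the levels. \textbf{Items 1, 2 and 4} are hash arguments.
\begin{itemize}
\item For item~1, a modified file $f' \neq f$ passes Level~1 only if $\text{SHA-256}(f') = \text{SHA-256}(f)$, which is a collision.
\item For item~2, the map of Definition~\ref{def:content-hash} is deterministic. It is also injective on distinct source sets, provided $\text{canonical}(\cdot)$ is injective. I would state this injectivity explicitly as a property of the canonical encoding: it needs length-prefixing or unambiguous delimiting of the $\|$ components. Given injectivity, a modified source set yields a different preimage, so an equal $H_c$ is again a collision.
\item Item~4 is identical, applied to the bytes of \texttt{contents.tar.gz} against CHECKSUM.
\end{itemize}
One subtlety is that the adversary might also rewrite the stored checksums or $H_c$ in $B$ to match. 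I would handle this by noting that these values live in $B$, or are covered by $A$. Such a rewrite therefore falls under items~3 and~5, and the five cases together cover every modification.

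\textbf{Items 3 and 5} are signature arguments. For item~3, suppose $B' \neq B$ (so $\text{canonical}(B') \neq \text{canonical}(B)$ by injectivity) and a $\sigma'$ verifies under $k_P^{\text{pub}}$. Then $(\text{canonical}(B'), \sigma')$ is a forgery on a message the publisher never signed. A reduction adversary that embeds the challenge public key as $k_P^{\text{pub}}$ and simulates the publisher's signing oracle breaks EU-CMA.

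Item~5 has two steps. First, a change to $A$ itself is the same forgery argument under $k_R^{\text{pub}}$. Second, a change to $B$ that leaves $A$ untouched breaks the binding $\text{SHA-256}(\text{canonical}(B')) = \text{SHA-256}(\text{canonical}(B))$ recorded in $A$, unless it is a collision.

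The main obstacle is item~5's claim about \emph{any} outer envelope document. $A$ commits to $\text{SHA-256}(B)$ but, as defined in Definition~\ref{def:registry-attestation}, not to \texttt{signature.json} or CHECKSUM directly. I would close the gap by chaining through the other items. CHECKSUM and the per-file hashes are either contained in $B$ or transitively pinned, because $H_c$ in $B$ is recomputed at Level~2. A modified $\sigma_P$ or \texttt{signature.json} is detected because Level~3 then fails (an invalid signature), or else it passes with a different valid signature, which is a forgery. If the manifest turns out not to bind CHECKSUM, I would weaken item~5 to ``modifying $B$ or $A$'' and note that the remaining documents are covered by Levels~3--4.

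Finally, a union bound over the five levels gives total undetected-tampering probability at most the sum of the collision and forgery advantages. This sum is negligible, which establishes the theorem.
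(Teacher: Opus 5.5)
Your route is the paper's own. The paper's proof cites collision resistance of SHA-256 and unforgeability of Ed25519, then asserts that any modification ``invalidates at least one hash or signature verification''; your per-item reductions are the careful version of that sentence.

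\textbf{The step that fails.} Your attempt to recover item~5 in full rests on the claim that CHECKSUM is ``either contained in $B$ or transitively pinned, because $H_c$ in $B$ is recomputed at Level~2.'' Under the paper's definitions it is neither. The manifest fields listed in Section~\ref{sec:archive-format} do not include the hash of \texttt{contents.tar.gz}, and neither does the attestation $A$ of Definition~\ref{def:registry-attestation}. Moreover, $H_c$ depends only on the \emph{extracted} sources, and many distinct gzip/tar byte strings yield the same sources.

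So an adversary can recompress \texttt{contents.tar.gz} (or alter tar header metadata, or add entries outside the hashed source set) and overwrite CHECKSUM with the new digest. Every level still passes:
\begin{itemize}
\item Level~4 passes because CHECKSUM moved with the archive.
\item Levels~1--2 pass because the sources are unchanged.
\item Levels~3 and~5 pass because neither $B$ nor $A$ changed.
\end{itemize}
Similarly, the signing timestamp in \texttt{signature.json} is signed by nobody unless it is duplicated inside $B$. Your fallback (``the remaining documents are covered by Levels 3--4'') therefore fails too.

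What your reductions actually establish is detection of any change to the extracted sources, to the canonical value of $B$, or to $A$. Item~4 holds only for changes that leave CHECKSUM intact, and item~5 holds only for $B$ and $A$. Getting the theorem as stated requires a design change, not a better proof: for example, put the CHECKSUM value and signing timestamp into $B$, or have $A$ commit to a hash of the whole envelope. The paper's proof does not notice this, so your instinct to weaken the statement was correct.

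\textbf{Two smaller repairs.}
\begin{itemize}
\item Replacing $\sigma_P$ or $\sigma_R$ by a \emph{different} valid signature on the same message is not an EU-CMA forgery, since that message was legitimately signed. You need strong unforgeability. Ed25519 with RFC~8032 verification (which rejects non-canonical $S$) does provide it, but you must invoke it explicitly.
\item $\text{canonical}(\cdot)$ is deliberately non-injective on bytes: Definition~\ref{def:content-hash} normalizes ``equivalent value representations.'' A reformatted \texttt{provenance.json} therefore passes both $\sigma_P$ and $\sigma_R$. Your injectivity hypothesis must be placed on the parsed value of $B$, and item~3 must be read modulo canonical equivalence.
\end{itemize}
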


\begin{proof}
SHA-256 is collision-resistant: finding $x' \neq x$ such that $\text{SHA-256}(x') = \text{SHA-256}(x)$ requires $O(2^{128})$ operations. Ed25519 signatures are unforgeable: producing a valid signature without the private key requires solving the discrete logarithm on Curve25519. Modification of any component at any level invalidates at least one hash or signature verification.
\end{proof}

\begin{definition}[Lineage Integrity]
\label{def:lineage}
The version chain of an artifact is a sequence $a_1, a_2, \ldots, a_n$ where each $a_i$'s provenance manifest contains $\textit{parent\_version} = v_{i-1}$ and $\textit{parent\_content\_hash} = H_c(a_{i-1})$. The chain has integrity if:
\[
\forall i \in [2, n] : B(a_i).\textit{parent\_content\_hash} = H_c(a_{i-1})
\]
Any modification to a historical version $a_j$ (with $j < n$) is detectable because $H_c(a_j)$ changes, breaking the chain at $a_{j+1}$.
\end{definition}

\section{Ecosystem Comparison}
\label{sec:comparison}

Table~\ref{tab:comparison} compares the distribution provenance capabilities of the present system against eight major package ecosystems across nine dimensions.

\begin{table*}[t]
\centering
\caption{Distribution provenance capabilities across package ecosystems. \textbf{Bold} indicates the strongest capability in each column. The final four columns (Trust Enforce through Gov.\ Policy) represent governance-enforced resolution capabilities that are outside the traditional scope of package distribution; they are included to show the design space extension, not as a like-for-like comparison.}
\label{tab:comparison}
\resizebox{\textwidth}{!}{%
\footnotesize
\setlength{\tabcolsep}{6pt}
\begin{tabular}{@{}l l l l l l l l l l@{}}
\toprule
\textbf{Ecosystem} & \textbf{Pkg Signing} & \textbf{Reg.\ Identity} & \textbf{Reg.\ Countersig} & \textbf{Consumer Enforce} & \textbf{Dep.\ Confusion} & \textbf{Trust Enforce} & \textbf{Effect Escalation} & \textbf{Comp.\ Level} & \textbf{Gov.\ Policy} \\
\midrule
npm & Sigstore (opt-in) & None & None & None & \texttt{.npmrc} config & None & None & None & None \\
Cargo/crates.io & \texttt{cargo-vet} (3P) & None$^\dagger$ & None & None & N/A (singleton) & None & None & None & None \\
Hex.pm & Protobuf sig & Partial$^\ddagger$ & Signed repo name & None & Repo name in signed data & None & None & None & None \\
PyPI/pip & PEP 740 (Sigstore) & None & None & None & \texttt{--index-url} flag & None & None & None & None \\
Go modules & \texttt{go.sum} via sumdb & Transparency log & None & \texttt{GOPROXY} config & Checksum DB & None & None & None & None \\
Docker/OCI & Notary/cosign & URL in image ref & None & None & Hostname in ref & None & None & None & None \\
NuGet & X.509 author sig & Repo countersig & X.509 (optional) & Not by default & Optional repo sig & None & None & None & None \\
Maven & PGP (optional) & None & None & None & \texttt{<repositories>} XML & None & None & None & None \\
\midrule
\textbf{Present work} & \textbf{Ed25519 (req.)} & \textbf{Ed25519/instance} & \textbf{Mandatory} & \textbf{Pinned + auth.} & \textbf{3-layer cryptographic} & \textbf{Min.\ tier req.} & \textbf{Transitive} & \textbf{Structural} & \textbf{Org.\ policies} \\
\bottomrule
\end{tabular}%
}

\vspace{0.3em}
{\scriptsize $^\dagger$Crates.io is a singleton registry; no private registries exist in the standard toolchain. $^\ddagger$Hex.pm signs a \texttt{repository} field inside the package protobuf, identifying the source repo by name.}
\end{table*}

\paragraph{Key findings.} No existing ecosystem combines all four capabilities: mandatory publisher signing, cryptographic registry identity, mandatory registry countersigning, and consumer-side cryptographic enforcement with namespace binding. NuGet is the closest, providing optional X.509 repository countersignatures, but these are not enforced during installation. Hex.pm signs a repository name inside the package metadata (the closest data-level prior art) but provides no consumer-side enforcement mechanism. Beyond supply-side provenance, the comparison reveals a second differentiator: no existing ecosystem performs governance-enforced dependency resolution. All eight ecosystems resolve dependencies based solely on version compatibility. Extending resolvers to enforce trust tier requirements and detect effect permission escalation across transitive dependencies is discussed in Section~\ref{sec:discussion}.

The Go module ecosystem's \texttt{sum.golang.org} transparency log is architecturally distinct: it provides append-only checksum verification (preventing modification of published modules) but does not address registry-level provenance (which registry served the module) or dependency confusion (the checksum DB records all modules, including attacker-published ones).

\section{Case Study: Integration with Runtime Governance}
\label{sec:integration}

The distribution provenance system described in Sections~\ref{sec:archive-format}--\ref{sec:verification-chain} is ecosystem-agnostic: any package registry can implement cryptographic registry identity, dual signatures, and authoritative namespace binding. To demonstrate integration with a broader governance architecture, we describe an implementation within Mashin, a platform for AI workflows that implements a three-layer structural governance architecture~\cite{mccann2026structural}.

Mashin enforces governance by construction at three layers: the definition layer (an append-only Evolution Provenance Ledger recording all artifact changes as hash-chained events), the capability layer (effect isolation with sandboxed execution), and the execution layer (a directive-based pure execution model where executors return declarative effect directives processed by a governed interpreter). Distribution provenance extends this to a four-phase lifecycle chain:

\begin{enumerate}
  \item \textbf{Definition} (Evolution Provenance Ledger): hash-chained events recording creation, edits, tests, and promotions. Each artifact version is assigned a content hash $H_c$.
  \item \textbf{Distribution} (this work): the provenance manifest's evolution anchor links the published artifact to its development history. The publisher signature and registry attestation prove authorship and distribution path.
  \item \textbf{Capability} (effect isolation): the provenance manifest's module-level metadata enables permission policy evaluation before compilation.
  \item \textbf{Execution} (directive interpreter): each execution run's provenance record includes the artifact's content hash, the same $H_c$ in the provenance manifest, the same $H_c$ in the definition ledger.
\end{enumerate}

\begin{proposition}[Continuous Provenance Chain]
The four phases are linked by content hashes: the definition ledger assigns $H_c$ to each artifact version; the provenance manifest records $H_c$ and the evolution anchor; each execution run records $H_c$ in its provenance output. An auditor can traverse the chain in either direction: from execution output back through distribution to definition, or forward from creation through publication to execution.
\end{proposition}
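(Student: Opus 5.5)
The plan is to model the four phases as records carrying a common key, the content hash $H_c$ of Definition~\ref{def:content-hash}. The proposition then becomes a statement that these records can be joined on that key. The join must be sound: each link binds a unique artifact version.

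First, I would fix the three linking records:
\begin{itemize}
  \item a ledger event $e$ recording $H_c$;
  \item a provenance manifest $B$ recording $H_c$ and an evolution anchor $\mathrm{anc}(B)=e$;
  \item an execution provenance record $x$ recording $H_c$.
\end{itemize}
I would also fix the capability phase. Since permission evaluation reads $B$'s module-level metadata, I would treat it as attached to $B$ rather than carrying its own hash. The chain is then $e \leftrightarrow B \leftrightarrow x$.

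Backward traversal goes from $x$ to $B$ to $e$. Given $x$, read $H_c(x)$ and locate the published artifact whose manifest satisfies $B.H_c = H_c(x)$. Then follow $\mathrm{anc}(B)$ to $e$ and check that $e.H_c = B.H_c$.

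Forward traversal goes from $e$ to $B$ to $x$. Index manifests and execution records by $H_c$.

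In both directions, equality of $H_c$ is the join condition.

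The substantive step is showing that these joins are unambiguous and tamper-evident, not just possible. Uniqueness comes from collision resistance of SHA-256, as used in the proof of Theorem~\ref{thm:tamper-evidence}: two distinct canonical artifacts sharing $H_c$ would give a collision. Authenticity of the $B$-side values comes from Theorem~\ref{thm:non-repudiation} and Level~5. $B$ is signed by the publisher, and $\mathrm{SHA\text{-}256}(B)$ is countersigned in $A$, so neither $H_c$ nor the anchor in $B$ can be altered undetectably. Integrity of the ledger side comes from the ledger's hash chain, in the same pattern as Definition~\ref{def:lineage}: altering $e$ breaks every later link.

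I expect the execution side to be the main obstacle. The paper gives no signature or hash-chain guarantee for the record $x$, so an adversary could rewrite $H_c$ in $x$. In that case the chain would still be traversable, but it would not be gap-free in the cryptographic sense.

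My first approach would be to state explicitly that the proposition asserts traversability plus detection. Any $x$ whose $H_c$ matches no verified $B$ is flagged. Any $x$ whose $H_c$ matches a verified $B$ is linked to exactly one artifact version.

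For a stronger claim, I would assume, citing the companion runtime work \cite{mccann2026structural,mccann2026mechanized}, that execution records are themselves hash-chained or signed by the governed interpreter. Under that assumption the argument for $x$ mirrors the one for $e$.
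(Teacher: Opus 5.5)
Your proposal is correct, and its core matches the paper's own justification. The paper gives no separate proof. It presents the proposition as immediate from the four-phase description just before it: the ledger, the provenance manifest, and each execution record carry the same $H_c$, and $B$ carries the evolution anchor. Traversal in either direction is then matching on $H_c$, which is exactly your join.

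What you add is a soundness layer the paper does not invoke at this point. You get uniqueness of the joined version from collision resistance, authenticity of $H_c$ and the anchor from Theorems~\ref{thm:non-repudiation} and~\ref{thm:tamper-evidence}, and ledger integrity from its hash chain. The bare statement (the links exist and can be followed) does not need this, but the abstract's ``no cryptographic gaps'' claim does. Your observation that nothing in this paper authenticates the execution record $x$ is accurate. Your second option, importing signed or hash-chained run records from the companion runtime work, is what would be needed to close that gap.

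One refinement: $H_c$ pins the artifact \emph{version}, not the distribution instance. $H_c$ omits the build metadata and the evolution anchor of $B$, and each registry issues its own attestation $A$. So several manifests and attestations can share one $H_c$, and so can several ledger events (e.g.\ a creation and a later promotion). Backward traversal from $x$ therefore yields a set of distribution records, and $x$ alone cannot say which registry delivered the executed copy. Your uniqueness claim is correctly stated at the version level, so the argument stands. Still, ``the published artifact whose manifest satisfies $B.H_c = H_c(x)$'' should read ``a'' published artifact, unless $x$ also records something like $F_R$ or $\sigma_R$.
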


This case study illustrates a general principle: distribution provenance is most valuable when integrated into a broader governance chain. Any system with build provenance, deployment attestation, or runtime auditing can extend its governance to the distribution phase using the mechanisms described in this paper. The integration is particularly relevant for domains where governance and auditability are regulatory requirements (AI workflows, financial services, healthcare), where the governance chain must be unbroken even as artifacts cross organizational boundaries through registries.

\section{Evaluation}
\label{sec:evaluation}

\subsection{Verification Performance}

All performance numbers in this section are measured from a running implementation on an Apple M-series processor, using median values over 50 iterations. The implementation uses Erlang's \texttt{:crypto} module for Ed25519 and SHA-256 operations.

Table~\ref{tab:perf} shows end-to-end verification latency for a 4-machine Krate across the three verification modes.

\begin{table}[htbp]
\centering
\caption{Measured Krate verification latency (4 machines, $n=50$, 5-iteration warmup).}
\label{tab:perf}
\small
\begin{tabular}{@{}llrrr@{}}
\toprule
\textbf{Mode} & \textbf{Levels} & \textbf{Median} & \textbf{Mean} & \textbf{p99} \\
\midrule
Default & 1--3 (file, identity, publisher) & 1.18~ms & 1.23~ms & 2.15~ms \\
Strict & 1--5 (+ envelope, registry) & 1.61~ms & 2.54~ms & 12.41~ms \\
Full & 1--6 (+ lineage provenance) & 1.13~ms & 1.25~ms & 4.12~ms \\
\bottomrule
\end{tabular}
\end{table}

Verification completes in low single-digit milliseconds for all modes. The strict mode shows higher variance (p99 of 12.41~ms) due to the additional envelope and registry attestation checks. Full mode is comparable to default mode because lineage verification in this measurement uses a local provenance ledger rather than a network round-trip. In production deployments with remote ledgers, Level~6 would add network latency.

\paragraph{Comparison context.} For reference, \texttt{npm audit signatures} performs Sigstore signature verification for each installed package, adding roughly 100--500~ms per package depending on network latency to the Rekor transparency log. Cargo's \texttt{crates.io} verifies SHA-256 checksums against a locally-cached index (sub-millisecond, comparable to our Level~1). Our system's verification adds two Ed25519 signature checks (publisher and registry) beyond checksum verification. The total cost of 1--2~ms is comparable to or faster than existing ecosystem verification because it avoids network round-trips to transparency logs for Levels 1--5.

Table~\ref{tab:primitives} breaks down the cost of individual cryptographic operations.

\begin{table}[htbp]
\centering
\caption{Individual operation latency (median of 50 iterations).}
\label{tab:primitives}
\small
\begin{tabular}{@{}lr@{}}
\toprule
\textbf{Operation} & \textbf{Latency} \\
\midrule
Ed25519 sign & 33~\textmu s \\
Ed25519 verify & 41~\textmu s \\
SHA-256 (1~KB) & $<$1~\textmu s \\
SHA-256 (10~KB) & 4~\textmu s \\
SHA-256 (100~KB) & 31~\textmu s \\
Canonical encode (small) & $<$1~\textmu s \\
Canonical encode (large, 20 machines) & 8~\textmu s \\
Tar extract (outer envelope) & 259~\textmu s \\
\bottomrule
\end{tabular}
\end{table}

The dominant cost in verification is tar extraction of the outer envelope (259~\textmu s), not cryptographic computation. Ed25519 verification (41~\textmu s) and SHA-256 hashing are negligible even for large artifacts.

The full end-to-end lifecycle (build + sign + attest + verify) for a 4-machine project completes in \textbf{4.51~ms median} (5.55~ms mean). Ed25519 signing and SHA-256 hashing both complete in single-digit microseconds.

\subsection{Archive Format Overhead}

The two-layer format adds the provenance envelope to each artifact (Table~\ref{tab:envelope}):

\begin{table}[htbp]
\centering
\caption{Archive overhead for projects of varying size.}
\label{tab:envelope}
\small
\begin{tabular}{@{}lrrr@{}}
\toprule
\textbf{Project} & \textbf{Source Size} & \textbf{Krate Size} & \textbf{Overhead} \\
\midrule
1 machine & 203~B & 5.6~KB & 2,659\% \\
4 machines & 812~B & 6.7~KB & 725\% \\
12 machines & 2.4~KB & 10.2~KB & 308\% \\
\bottomrule
\end{tabular}
\end{table}

The overhead percentage is high for minimal projects (the provenance envelope and signature documents have a fixed minimum size of several kilobytes), but decreases rapidly as source size grows: from 2,659\% for a single-machine project down to 308\% for a 12-machine project. For production projects with meaningful source content, the absolute envelope size (several KB) becomes negligible relative to the machine source.

\subsection{Security Analysis}

We evaluate the system against the threat model of Section~\ref{sec:threat-model} (Table~\ref{tab:security}):

\begin{table}[htbp]
\centering
\caption{Attack resistance by defense layer.}
\label{tab:security}
\small
\begin{tabular}{@{}lcccc@{}}
\toprule
\textbf{Attack} & \textbf{L1} & \textbf{L2} & \textbf{L3} & \textbf{Result} \\
\midrule
Dependency confusion & \checkmark & \checkmark & \checkmark & Blocked \\
Namespace squatting & \checkmark & -- & \checkmark & Blocked \\
Tampered artifact & -- & -- & \checkmark$^*$ & Detected \\
Registry DB compromise & -- & -- & \checkmark$^*$ & Detected \\
Typosquatting & -- & -- & -- & Not addressed$^\dagger$ \\
\bottomrule
\end{tabular}

\vspace{0.3em}
{\footnotesize L1 = namespace enforcement, L2 = authoritative binding, L3 = registry attestation. \checkmark\ = defended. $^*$Publisher signature provides tamper detection independent of registry; ``Registry DB compromise'' assumes database-level access without key compromise (the signing key is assumed protected per assumption~f of the threat model). $^\dagger$Typosquatting requires name similarity analysis, orthogonal to cryptographic provenance.}
\end{table}

\paragraph{Residual risks.} The system does not address: (1)~typosquatting (publishing \texttt{@community/sIack} to impersonate \texttt{@community/slack}), which requires name similarity analysis at the registry; (2)~insider threats with registry server access (key compromise), which requires HSMs and operational security; (3)~social engineering of authorized publishers (the xz-utils vector), which requires behavioral analysis beyond cryptographic verification.

\section{Discussion}
\label{sec:discussion}

\subsection{Limitations}

\paragraph{Key management and revocation.} Developers must manage Ed25519 keypairs and each registry instance must manage its own keypair. Auto-generation on first use and secure local storage mitigate the developer burden, but key rotation, backup, and revocation remain operational concerns. The current design has no key revocation mechanism: if a publisher's private key is compromised, there is no protocol for invalidating artifacts signed with that key or for propagating revocation to consumers who have already installed those artifacts. The TUF framework~\cite{samuel2010survivable} addresses this with threshold signatures and delegated trust, but integrating TUF-style key management adds significant complexity. For enterprise deployments, hardware security modules (HSMs) or secret management systems (Vault) are recommended for registry keys. A revocation protocol is identified as future work.

\paragraph{TOFU trust establishment.} The first time a consumer configures a private registry, they must trust the fingerprint they receive. If the initial connection is compromised (man-in-the-middle), the consumer pins the attacker's fingerprint. This is the same limitation as SSH's \texttt{known\_hosts}, well-understood and acceptable for most deployment scenarios, but not suitable for zero-trust environments without an out-of-band fingerprint verification channel.

\paragraph{Configuration required for full defense.} The three-layer defense requires consumers to configure registry bindings with authoritative namespace binding. Without explicit configuration, the resolver uses the default public registry and Layers 1--2 provide no protection. Layer 3 (registry attestation) still provides post-installation auditability (``which registry did this come from?'') but not prevention. The full defense is opt-in per namespace, which is appropriate for enterprise scenarios where private namespaces exist.

\paragraph{Scalability of registry identity.} Each registry instance has its own keypair. In a large organization with many team-level registries, key management complexity grows linearly. The trust hierarchy (parent registry) helps organize relationships but does not reduce the number of keys. Federation protocols for cross-registry trust have not yet been specified.

\subsection{Future Work}

\paragraph{Transparency log.} When the public registry exceeds 1000 published artifacts, integrating a transparency log (similar to \texttt{sum.golang.org} or Sigstore's Rekor) would provide an independent, append-only record of all publication events. This complements the dual-signature model: the signatures prove publisher and registry identity; the transparency log proves the publication was publicly witnessed.

\paragraph{Formal verification of the verification chain.} The six-level verification chain's security properties (Theorems~\ref{thm:dep-confusion}--\ref{thm:tamper-evidence}) could be mechanized in Rocq or Lean, providing machine-checked proofs rather than paper proofs.

\paragraph{Key rotation under Byzantine assumptions.} The current key rotation protocol assumes an honest registry that publishes accurate key validity periods. Under Byzantine assumptions (a compromised registry publishing false key histories), additional mechanisms, such as key commitments in a transparency log, would be needed.

\paragraph{AI-generation provenance.} The provenance manifest's extensible lineage section could carry structured metadata about AI involvement in artifact creation: which model was used, at what composition level (assisted, directed, or autonomous), and whether the output was human-reviewed. Because the lineage is covered by both publisher and registry signatures, this metadata would be cryptographically attested rather than self-reported. No existing package ecosystem tracks AI-generation provenance in any form. Consumer-side governance policies that filter dependencies by AI composition level are a natural extension.

\paragraph{Governance-enforced dependency resolution.} Current package resolvers check only version compatibility. In a system where every artifact carries governance metadata (trust tiers, effect declarations, composition levels), the resolver could enforce governance constraints during installation rather than relying on post-install audit tools. The most interesting check is transitive effect escalation detection: flagging when a new dependency transitively introduces I/O capabilities not present in the current dependency tree. The governance surface of a dependency set is monotone (adding dependencies can only increase exposure), which makes incremental reporting feasible.

\paragraph{Cross-registry federation and federated governance.} Specifying protocols for cross-registry trust (including cross-signing, trust delegation, and federated namespace resolution) would enable multi-organization deployment scenarios where artifacts flow between registries with different trust domains. Extending governance-enforced resolution to federated scenarios raises open questions: when two organizations with different governance policies share dependencies, how should the resolver reconcile conflicting trust tier requirements or effect policies? A governance negotiation protocol for cross-organization resolution is a natural extension of the present work.

\section{Conclusion}
\label{sec:conclusion}

We have presented a cryptographic artifact distribution provenance system that eliminates dependency confusion attacks through structural design. The system comprises three components (cryptographic registry identity, a dual-signature distribution model, and authoritative namespace binding) that create three independent cryptographic barriers against supply chain attacks. A systematic comparison across eight major package ecosystems confirms that no existing system combines all four capabilities: mandatory publisher signing, cryptographic registry identity, mandatory registry countersigning, and consumer-side cryptographic enforcement.

The key insight is that dependency confusion is not a configuration problem; it is a provenance problem. The artifact itself must carry cryptographic proof of its distribution path. Configuration-based defenses (registry URL settings, index URL flags, proxy configurations) operate outside the artifact and fail silently when misconfigured. Artifact-level cryptographic provenance is verifiable regardless of configuration state.

As demonstrated through a case study (Section~\ref{sec:integration}), the distribution provenance system integrates naturally with broader governance architectures. When combined with definition-phase provenance and execution-phase auditing, it extends to a four-phase lifecycle chain covering definition, distribution, capability, and execution. Content hashes link all four phases into a continuous, tamper-evident chain, enabling end-to-end verification from an artifact's creation through publication and distribution to execution.

\bibliographystyle{plainnat}
\bibliography{distribution-references}

\end{document}